\documentclass[conference]{IEEEtran}
\IEEEoverridecommandlockouts
\hyphenation{op-tical net-works semi-conduc-tor}
\hyphenation{op-tical net-works semi-conduc-tor}
\usepackage{graphicx}
\usepackage{array}
\usepackage{amsmath}
\usepackage{float}
\usepackage{multirow}
\usepackage{array}
\usepackage{cancel}

\usepackage{tabu}
\usepackage{amsthm}
\usepackage{epsfig}
\usepackage{epstopdf}
\usepackage{epsf}
\usepackage{color}
\usepackage[english]{babel}

\theoremstyle{definition}
\newtheorem{theorem}{Theorem}
\newtheorem{lemma}[theorem]{Lemma}

\newtheorem{proposition}{Proposition}

\usepackage{amssymb}
\usepackage{cite}
\usepackage{amsmath,amssymb,amsfonts}
\usepackage{graphicx}
\usepackage{textcomp}
\usepackage{xcolor}
\usepackage{algorithm}
\usepackage[noend]{algpseudocode}
\usepackage[T1]{fontenc}
\usepackage{graphicx,lipsum,afterpage,subcaption}
\usepackage{enumitem}
\usepackage{lipsum}
\usepackage{mathtools}
\usepackage{cuted}

\begin{document}

	\title
	{Communication-Channel Optimized Partition}

	\author{\IEEEauthorblockN{Thuan Nguyen}
		\IEEEauthorblockA{School of Electrical and\\Computer Engineering\\
			Oregon State University\\
			Corvallis, OR, 97331\\
			Email: nguyeth9@oregonstate.edu}
		\and
		\IEEEauthorblockN{Thinh Nguyen}
		\IEEEauthorblockA{School of Electrical and\\Computer Engineering\\
			Oregon State University\\
			Corvallis, 97331 \\
			Email: thinhq@eecs.oregonstate.edu}
	}

	\maketitle
	\begin{abstract}
Given an original discrete source $X$ with the distribution $p_X$ that is corrupted by noise to produce the noisy data $Y$ with the given joint distribution $p_{(X,Y)}$. A quantizer/classifier $Q: Y \rightarrow Z$ is then used to classify/quantize the data $Y$ to the discrete partitioned output $Z$ with probability distribution $p_Z$. Next, $Z$ is transmitted over a deterministic channel with a given channel matrix $A$ that produces the final discrete output $T$. One wants to design the optimal quantizer/classifier $Q^*$ such that the cost function $F(X,T)$ between the input $X$ and the final output $T$ is minimized while the probability of the partitioned output $Z$ satisfies a concave constraint $G(p_Z) \leq C$. Our results generalized some famous previous results. First, an iteration linear time complexity algorithm is proposed to find the local optimal quantizer. Second, we show that the optimal partition should produce a hard partition that is equivalent to the cuts by hyper-planes in the space of the posterior probability $p_{X|Y}$. This result finally provides a polynomial-time algorithm to find the globally optimal quantizer. 
	\end{abstract}
	Keyword: partition, channel quantization, impurity, optimization, constraints, mutual information, entropy.
	\section{Introduction}
Channel optimized partition/quantization is a common approach to lossy-compression data source-channel coding  that aims to minimize the end to end distortion when the quantized/classified data is transmitted over a noisy channel. Due to the huge volume of data and the limited rate of the transmission channel,  the data should be coded/quantized at the local stations/nodes before transmitted over a channel to the central station/node. The quality of the relay channel that is specified by its channel matrix, therefore, is important. Of course, one should design the partition/classification based on the channel matrix of the relay channel. 
From the source coding perspective, the quality of quantization/partition is normally measured by the end-to-end distortion between the input and the final output. While the squared-error distortion often uses to measure the distortion of scalar quantization, it is less appropriate for other problems in communication context i.e., maximizing the mutual information or minimizing the compression rate where other distortions i.e., the Kullback-Leiber divergence is more preferred. 
From the channel coding perspective, one should design the quantizer such that the compression rate of partition output is smaller than the channel capacity. From the power consumption perspective, the partitioned output should be coded such that the total energy consumption is below the power budget of transmitters. Generally, one has to design the optimal quantizer such that the partitioned output has to satisfy a certain constraint while an end-to-end cost function between the input and the final output is minimized. 

In this paper, we consider the design of quantizer with the aim of minimizing the end-to-end impurity between the input and the final output while the probability distribution of the partitioned output satisfies a certain concave constraint. The impurity termed the loss function that measures the "impurity" of the partitioned sets. Some of the popular impurity functions are the entropy function and the Gini index \cite{breiman2017classification}, \cite{quinlan2014c4}. For example, when the empirical entropy of a set is large, this indicates a high level of non-homogeneity of the elements in the set, i.e., "impurity". Impurity function was vastly used in learning theory and decision tree \cite{quinlan2014c4}, \cite{breiman2017classification}, \cite{nadas1991iterative}, \cite{chou1991optimal}, \cite{burshtein1992minimum}, \cite{coppersmith1999partitioning}.
Interestingly, if the impurity is conditional entropy, minimizing impurity is equivalent to maximizing the mutual information between the input and the final output \cite{kurkoski2014quantization},
\cite{zhang2016low}. Therefore, partition/quantization that minimizes the entropy impurity has many applications in communication \cite{kurkoski2014quantization}, \cite{kurkoski2017single}, \cite{romero2015decoding}, \cite{vangala2015quantization}. 
On the other hand, design the optimal partition such that the partitioned output has to satisfy a constraint is very important in the case of the relay channel is a limited resource channel. For example, if the relay channel is a low bandwidth channel, the entropy of partitioned output that controls the maximum compression transmission rate is very important. The power and time delay of transmission constraints also can be constructed similarly to the entropy constraint to establish some useful applications. That said, the problem of finding the optimal quantizer that minimizes the end-to-end impurity between input and final output under a constraint is an interesting problem  that covers many sub-problems in \cite{kurkoski2014quantization}, \cite{zhang2016low}, \cite{nguyen2018capacities}, \cite{strouse2017deterministic}, \cite{winkelbauer2013channel}. For example, if it is non-constraint with partitioned output and the channel matrix is an identity matrix, our setting is back to the model in \cite{kurkoski2014quantization}, \cite{zhang2016low}, \cite{nguyen2018capacities} using the impurity function is conditional entropy. If the channel matrix is not an identity matrix  and the impurity function is conditional entropy, our problem can be viewed as the problem in \cite{winkelbauer2013channel}. If the channel matrix is an identity matrix and there isn't any constraint for partitioned output, our setting is identical to the setting in \cite{laber2018binary} using Gini index impurity function. 
Finally, if the relay channel is perfect (channel matrix is an identity matrix)  and both impurity and constraint function are entropy, our problem is the same as the problem in \cite{strouse2017deterministic}. The more detail of these sub-problems can be seen in Section II. 

The outline of our paper is as follows.  In Section \ref{sec: formulation}, we describe the problem formulation and its applications.  In Section \ref{sec: solution}, we provide the optimality condition for the optimal partition.  In Section \ref{sec: algorithm}, we provide an iteration algorithm that can find the local optimal solution. Moreover, we show that the optimal partition is equivalent to the cuts by hyper-planes in the probability space of the posterior probability. Finally,  we provide a few concluding remarks in Section \ref{sec: conclusion}.

\section{Problem Formulation}
	\label{sec: formulation}
		\begin{figure}
		\centering
		\includegraphics[width=2.5 in]{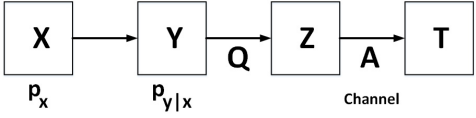}\\
		\caption{The quantizer $Q$ is designed to minimize the impurity function between input $X$ and final output $T$ while the partitioned output $Z$ has to satisfy a certain constraint.}\label{fig: 1}
	\end{figure}	
	

	Fig. \ref{fig: 1} illustrates our model.   The input set consists of $N$ discrete symbols $X=(X_1,X_2,\dots,X_N)$  with a given pmf $p_X=\{p_1,p_2,\dots,p_N \}$. The data set consists of $M$ discrete vectors $Y=(Y_1,Y_2,\dots,Y_M)$ having the pmf $p_Y=\{p_{Y_1},p_{Y_2},\dots,p_{Y_M}\}$ and the joint distribution $p_{(X_n,Y_m)}$, $\forall$ $n=1,2,\dots,N$ and $m=1,2,\dots,M$. 
$Y$ will be quantized to produce the partitioned output $Z=(Z_1,Z_2,\dots,Z_K)$ having the pmf $p_Z=\{p_{Z_1},p_{Z_2},\dots,p_{Z_K}\}$ using a quantizer $Q: Y \rightarrow Z$. Noting that $Q$ is possible a stochastic quantizer i.e., $0 \leq p_{Z_k|Y_m} \leq 1$.  The partitioned output $Z$ is then transmitted over a relay channel having channel matrix $A$  to produce the final output $T=(T_1,T_2,\dots,T_H)$.
Noting that the entry $A_{kh}$ of channel matrix $A$ denotes the conditional probability  $p_{T_h|Z_k}$ such that the transmitter transmits $Z_k$ but the receiver received $T_h$, i.e., $A_{kh}=p_{T_h|Z_k}$ for $h=1,2,\dots,H$ and $k=1,2,\dots,K$. 	Our goal is finding the optimal quantizer $Q^*$ to minimize the end-to-end impurity/cost function between input and the final output $F(X,T)$ while the partitioned output satisfies a certain constraint $G(p_Z) \leq C$. 

\subsection{Cost measurement}	
We consider the impurity/cost function that takes the following form $F(X,T) = \sum_{h=1}^{H} F(X,T_h),$
where $F(X,T_h)$ denotes the impurity in each final output $T_h$.  
\begin{equation}
F(X,T_h) =\sum_{h=1}^{H} p_{T_h}f[p_{X_1|T_h}, p_{X_2|T_h},\dots, p_{X_N|T_h}]. \label{eq: cost function 1}
\end{equation}

That said, the total impurity $F(X,T)$ is added up from the impurity in each final output $F(X,T_h)$. The factor $p_{T_h}$ denotes the weight of the final output $T_h$, $f[.]$ is a \textit{concave} function that measures the loss in each final output $T_h$ and $p_{X_n|T_h}$ denotes the conditional probability of $X_n$ given $T_h$. For convenient, let's define 
\begin{equation*}
p_{(X,Y_m)}=[p_{(X_1,Y_m)}, p_{(X_2,Y_m)}, \dots, p_{(X_N,Y_m)}],
\end{equation*}

\begin{equation}
\label{eq: 5}
p_{(X,T_h)}=[p_{(X_1,T_h)}, p_{(X_2,T_h)}, \dots, p_{(X_N,T_h)}].
\end{equation}

\begin{equation*}
p_{X|T_h}=[p_{X_1|T_h}, p_{X_2|T_h}, \dots, p_{X_N|T_h}],
\end{equation*}

Now, suppose that a quantizer $Q$ quantizes  $Q(Y_m) \rightarrow Z_k$ with the probability $p_{Z_k|Y_m}$, then
\begin{equation}
p_{(X_n,Z_k)}=\sum_{Y_m \in Y }^{} p_{(X_n,Y_m)} p_{Z_k|Y_m}. 
\end{equation}

However, the final output $T$ can be computed via the partitioned output $Z$ and the given channel matrix $A$.  Thus, $p_{(X_n,T_h)}$ can be determined by:
\begin{equation}
\label{eq: XZ and XT}
p_{(X_n,T_h)}=\sum_{k=1}^{K}p_{(X_n,Z_k)}A_{kh}.
\end{equation}

Now,  the impurity function in each final output $T_h$ can be rewritten by:
\begin{small}
\begin{eqnarray}
 F(X,T_h) \!=\!   (\sum_{n\!=\!1}^{N}p_{(X_n\!,\!T_h)})f[\dfrac{p_{(X_1\!,\!T_h)}}{\sum_{n\!=\!1}^{N}p_{(X_n\!,\!T_h)}}, \dots,\dfrac{p_{(X_N\!,\!T_h)}}{\sum_{n\!=\!1}^{N}p_{(X_n\!,\!T_h)}}  ]
\end{eqnarray}
\end{small}
where $\sum_{n=1}^{N}p_{(X_n,T_h)}$ is the weight of $T_h$ and $\dfrac{p_{(X_n,T_h)}}{\sum_{n=1}^{N}p_{(X_n,T_h)}}$ denotes the conditional distribution $p_{X_n|T_h}$. \textit{The impurity function, therefore, is only the function of the joint distribution $p_{(X_n,T_h)}$.} For convenient, in the rest of paper, we denote $F(X,T_h)$ by $F(p_{(X,T_h)})$ where the joint distribution vector $p_{(X,T_h)}$ is defined in (\ref{eq: 5}). 

\subsection{Constraints of the partitioned output}
We want to design the quantizer such that the partitioned output satisfies the following constraint
\begin{equation*}
G(p_Z)=\sum_{k=1}^{K}g_k(p_{Z_k}) \leq C
\end{equation*}
where $g_k(.)$ is an arbitrary \textit{concave} function, $\forall$ $k$, i.e., the entropy function, the linear function. For example, if we want to compress data $Y$ to $Z$ and then transmit $Z$ over a low bandwidth channel, the entropy of $p_Z$ which is controlled the maximum compression rate, is important. Similarly, to transmit $Z$ over a channel, each value $Z_k$ is coded to a pulse which have a difference cost of transmission i.e., power consumption or time delay. The cost of transmission now can be formulated by a linear constraint.  
\subsection{Problem Formulation}
To jointly design the quantizer such that the impurity function is minimized while the partitioned output satisfies a certainty constraint,   we are interested in solving the following optimization problem:
	
	\begin{equation}
		\label{eq: jointly optimization} 
		Q^*=\min_{Q} \beta F(X;T) + G(p_Z),
	\end{equation}
where $\beta >0$ is pre-specified parameter to control the trade-off between minimizing $F(X;T)$ or minimizing $G(p_Z)$. Noting that corresponding to the setting of $f[.]$, $g(.)$, $\beta$ and channel matrix $A$, our problem generalized many sub-problems. For example, if $f[.]$ is entropy function, $\beta=+\infty$ and $A$ is an identity matrix, we solve the problem in \cite{kurkoski2014quantization}, \cite{zhang2016low}. If $f[.]$  is Gini index or entropy, $A$ is an identity matrix, $N=2$ and $\beta=+\infty$, the problem is solved in \cite{laber2018binary}, if $f[.]$ is entropy function, $\beta=+\infty$ and $A$ is an identity matrix, the problem in \cite{winkelbauer2013channel} is solved. If both impurity and constraint are entropy and $A$ is an identity matrix, our setting is identical to the setting in \cite{strouse2017deterministic}. 

Noting that we assume that both $f[.]$ and $g_k[.]$ are \textit{concave} functions which satisfy the following inequality:
\begin{equation}
\label{eq: concave function}
f(\lambda a + (1-\lambda)b) \geq \lambda f(a) + (1-\lambda)f(b), \forall \lambda \in (0,1),
\end{equation}
for all probability vector $a=[a_1,a_2,\dots,a_N]$ and $b=[b_1,b_2,\dots,b_N]$  with equality if and only if $a=b$.
Based on the concave property, an iteration algorithm is proposed to find the local optimal quantizer. Moreover, we show that the optimal quantizers (local and global) produce a hard partition that is equivalent to the cuts by hyper-planes in the space of the posterior probability $p_{X|Y}$. This interesting property finally yields a polynomial time algorithm to determine the truly global optimal quantizer. 

\section{Optimality Condition}
\label{sec: solution}

We first begin with some properties of $F(X,T_h)$. 
\begin{proposition}
\label{prop: 2}
The impurity in each subset $T_h$ is defined by $F(X,T_h)$ which has the following properties:

(i) The impurity function is proportional increasing/ decreasing to its weight: if  $p_{(X,T_a)}=\lambda p_{(X,T_{b})}$, then
\begin{equation}
\dfrac{F(X,T_a)}{F(X,T_b)}=\lambda.
\end{equation}

(ii) The impurity gain after partition is always non-negative: If $p_{(X,T_a)}=p_{(X,T_b)}+p_{(X,T_c)}$, then
\begin{equation}
\label{eq: concave of partition}
F(X,T_a) \geq F(X,T_b) + F(X,T_c).
\end{equation}

\end{proposition}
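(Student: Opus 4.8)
The plan is to reduce both claims to a single structural observation: writing $w_h := \sum_{n=1}^{N} p_{(X_n,T_h)}$ for the weight of $T_h$, the impurity is
\begin{equation*}
F(X,T_h) \;=\; F\!\left(p_{(X,T_h)}\right) \;=\; w_h\, f\!\left(\tfrac{p_{(X,T_h)}}{w_h}\right),
\end{equation*}
i.e.\ $F(\cdot)$ is precisely the positively homogeneous (degree one) extension of the concave map $f$ from the probability simplex to the nonnegative orthant. Part (i) is then the homogeneity statement and part (ii) is superadditivity, which will come straight from the concavity hypothesis \eqref{eq: concave function}.

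For (i), suppose $p_{(X,T_a)} = \lambda\, p_{(X,T_b)}$. Then $w_a = \lambda w_b$, and the normalized conditional vectors coincide, $p_{(X,T_a)}/w_a = \lambda p_{(X,T_b)}/(\lambda w_b) = p_{(X,T_b)}/w_b$, so $f$ is evaluated at the same argument in both cases. Hence $F(X,T_a) = w_a\, f(p_{(X,T_b)}/w_b) = \lambda w_b\, f(p_{(X,T_b)}/w_b) = \lambda\, F(X,T_b)$. The degenerate case $w_b = 0$ forces $p_{(X,T_b)} = 0$, and both sides vanish.

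For (ii), set $w_b := \sum_n p_{(X_n,T_b)}$ and $w_c := \sum_n p_{(X_n,T_c)}$, so $w_a = w_b + w_c$ by hypothesis. Assuming $w_b, w_c > 0$ (otherwise (ii) reduces to (i)), put $\lambda := w_b/w_a \in (0,1)$ and let $a := p_{(X,T_b)}/w_b$, $b := p_{(X,T_c)}/w_c$ be the two conditional distributions; each is a probability vector, so \eqref{eq: concave function} applies. A direct computation gives $p_{(X,T_a)}/w_a = \lambda a + (1-\lambda) b$, and then
\begin{equation*}
\begin{aligned}
F(X,T_a) &= w_a\, f\!\big(\lambda a + (1-\lambda) b\big) \\
&\geq \lambda w_a\, f(a) + (1-\lambda) w_a\, f(b) \\
&= w_b\, f(a) + w_c\, f(b) = F(X,T_b) + F(X,T_c).
\end{aligned}
\end{equation*}

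There is no real obstacle here beyond careful bookkeeping with the normalizing weights and the boundary cases where some $w_h$ vanishes. The one point worth recording for later use is the equality condition: by the strict-concavity clause of \eqref{eq: concave function}, equality holds in \eqref{eq: concave of partition} if and only if $a = b$, i.e.\ $T_b$ and $T_c$ have identical posterior profiles $p_{X|T_b} = p_{X|T_c}$ — the fact that will drive the later argument that an optimal partition must be hard and separated by hyperplanes in posterior space.
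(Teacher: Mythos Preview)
Your proof is correct and follows essentially the same approach as the paper: for (i) you observe that scaling the joint vector leaves the conditional $p_{X|T_h}$ unchanged while scaling the weight $p_{T_h}$ by $\lambda$, and for (ii) you write $p_{X|T_a}$ as the convex combination $\tfrac{w_b}{w_a}\,p_{X|T_b}+\tfrac{w_c}{w_a}\,p_{X|T_c}$ and apply the concavity assumption \eqref{eq: concave function}, exactly as the paper does. Your additional remarks on the perspective-function framing, the degenerate $w_h=0$ cases, and the equality condition are nice extras not spelled out in the paper but do not change the underlying argument.
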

\begin{proof}

(i) From $p_{(X,T_a)}=\lambda p_{(X,T_b)}$, then  $p_{X|T_a}=p_{X|T_b}$ and $p_{T_a}=\lambda p_{T_b}$. Thus, using the definition of $F(X,T_h)$ in (\ref{eq: cost function 1}), it is obviously to prove the first property. 

(ii) By dividing both side of  $p_{(X,T_a)}=p_{(X,T_b)}+p_{(X,T_c)}$ to $p_{T_a}$, we have
\begin{equation}
\label{eq: 9}
p_{X|T_a}= \dfrac{p_{T_b}}{p_{T_a}}p_{X|T_b}+ \dfrac{p_{T_c}}{p_{T_a}}p_{X|T_c}.
\end{equation}

Now, using the original definition in (\ref{eq: cost function 1}),
\begin{eqnarray}
F(X,T_a) &\!=\!& p_{T_a}f(p_{X|T_a}) \nonumber\\
&\!=\!& p_{T_a} f [\dfrac{p_{T_b}}{p_{T_a}}p_{X|T_b} + \dfrac{p_{T_c}}{p_{T_a}}p_{X|T_c}] \label{eq: 10}\\
&\!\geq \!& p_{T_a} [\dfrac{p_{T_b}}{p_{T_a}} f(p_{X|T_b}) + \dfrac{p_{T_c}}{p_{T_a}}f(p_{X|T_c})] \label{eq: 11}\\
&\!=\!&p_{T_b}f(p_{X|T_b}) + p_{T_c}f(p_{X|T_c}) \nonumber\\
&\!=\!& F(X,T_b) + F(X,T_c) \nonumber
\end{eqnarray}
with (\ref{eq: 10}) is due to (\ref{eq: 9}) and (\ref{eq: 11}) due to concave property of $f(.)$ which is defined in (\ref{eq: concave function}) using $\lambda= \dfrac{p_{T_b}}{p_{T_a}}$, $1-\lambda=\dfrac{p_{T_c}}{p_{T_a}}$. 
\end{proof}

Now, we are ready to show the main result which characterizes the condition for an  optimal partition $Q^*$. 
\begin{theorem}
\label{theorem: 1}
Suppose that  an optimal quantizer $Q^*$ yields the optimal partitioned output  $Z=(Z_1,Z_2,\dots,Z_K )$ and the optimal final output $T=(T_1,T_2,\dots,T_H )$. We define vector $c_k=[c_k^1,c_k^2,\dots,c_k^N]$, $k=1,2,\dots,T$ where
\begin{equation}
\label{eq: 16}
c_k^n= \frac{\partial F(p_{(X,T_k)})}{\partial p_{(X_n,T_k)}}, \forall n \in \{1,2,\dots,N\}. 
\end{equation}
We also define 
\begin{equation}
\label{eq: 17}
d_k=\frac{\partial g_k(p_{Z_k})}{\partial p_{Z_k}}.
\end{equation}
Define the "distance" from $Y_m \in Y$ to $Z_k$ is 
\begin{eqnarray}
D(Y_m,Z_k) &\!=\!&\beta \sum_{h=1}^{H} \sum_{n=1}^{N}[c_k^n p_{(X_n,Y_m)} ]A_{kh} \!+\! d_k p_{Y_m}. \label{eq: optimality condition}
\end{eqnarray}
Then, data $Y_m$ is quantized to $Z_k$ if and only if $D(Y_m,Z_k) \leq D(Y_m,Z_{s})$
for $\forall s \in \{1,2,\dots,K\}$ and $ s \neq k$. 
\end{theorem}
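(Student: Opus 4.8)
The plan is to prove the theorem by a standard exchange/variational argument, exploiting the concavity established in Proposition~\ref{prop: 2}. The key observation is that the objective $\beta F(X;T) + G(p_Z)$, being a sum over $h$ of terms $F(p_{(X,T_h)})$ plus a sum over $k$ of terms $g_k(p_{Z_k})$, is itself a concave function of the underlying joint-distribution vectors (each $F(p_{(X,T_h)})$ is concave by part (ii) of the proposition, and each $g_k$ is concave by assumption). Since we are minimizing a concave function over the polytope of stochastic quantizers $\{p_{Z_k|Y_m}\}$, an optimal solution can be taken at an extreme point, which forces a hard partition: each $Y_m$ is assigned to exactly one $Z_k$. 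This is the conceptual heart of why a ``nearest-centroid'' rule even makes sense.

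Granting the hard-partition structure, I would then carry out the local-optimality computation. Fix the optimal assignment and suppose, for contradiction, that some $Y_m$ currently assigned to $Z_k$ could be moved to $Z_s$ with $D(Y_m,Z_s) < D(Y_m,Z_k)$. The first step is to write the total objective as a function of the ``soft'' assignment weight $p_{Z_k|Y_m}$ of the single symbol $Y_m$, holding all other assignments fixed; concretely, one perturbs $p_{Z_k|Y_m} \to p_{Z_k|Y_m} - \epsilon$ and $p_{Z_s|Y_m} \to p_{Z_s|Y_m} + \epsilon$. Using \eqref{eq: XZ and XT} and the bilinearity of $p_{(X_n,Z_k)}$ in $p_{Z_k|Y_m}$, the derivative of $\beta F(X;T)$ with respect to $p_{Z_k|Y_m}$ is, by the chain rule, exactly $\beta \sum_{h}\sum_{n} c_k^n\, p_{(X_n,Y_m)}\,A_{kh}$, where $c_k^n = \partial F(p_{(X,T_k)})/\partial p_{(X_n,T_k)}$ as in \eqref{eq: 16}; and the derivative of $g_k(p_{Z_k})$ picks up $d_k\, p_{Y_m}$ since $\partial p_{Z_k}/\partial p_{Z_k|Y_m} = p_{Y_m}$. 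Summing these gives precisely $D(Y_m,Z_k)$ from \eqref{eq: optimality condition}. Hence the directional derivative of the objective along the exchange direction is $D(Y_m,Z_s) - D(Y_m,Z_k)$ (to first order), and if this is negative the objective strictly decreases for small $\epsilon>0$ — but wait, one must be careful here, because the linear/first-order term only tells us the objective decreases if we can actually move in that direction, which we can since $p_{Z_k|Y_m}=1>0$ at the current hard assignment. Because the objective is concave, the full (not merely first-order) change along this segment is at least the linear term, so the decrease is genuine, contradicting optimality. Therefore $D(Y_m,Z_k) \le D(Y_m,Z_s)$ for all $s$, proving the ``only if'' direction.

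For the ``if'' direction: if $Y_m$ is the unique minimizer of $D(Y_m,Z_k)$ over $k$ but is assigned elsewhere at the optimum, the same exchange argument moves it to $Z_k$ and strictly decreases the objective, a contradiction; ties (non-unique minimizers) are handled by the usual tie-breaking remark, since moving $Y_m$ between two symbols achieving the minimum leaves the first-order objective unchanged, so such a $Y_m$ may be assigned to any minimizing $Z_k$ without loss of optimality.

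The main obstacle I anticipate is making the concavity-to-extreme-point step fully rigorous: one must verify that $F(p_{(X,T_h)})$ is concave \emph{jointly} in the relevant variables and that the feasible region $\{G(p_Z)\le C\}$ (or, in the Lagrangian formulation \eqref{eq: jointly optimization}, the simplex of stochastic quantizers) is a polytope on which a concave function attains its minimum at a vertex — and then to check that vertices of the stochastic-quantizer polytope are exactly the hard partitions. A secondary subtlety is differentiability of $f$ and $g_k$ at the relevant boundary points (some $p_{X_n|T_h}$ may be $0$); I would either assume $f,g_k$ are differentiable on the simplex or interpret $c_k^n, d_k$ via one-sided derivatives/subgradients, which is enough for the inequality direction of the argument. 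Once those points are settled, the derivative bookkeeping in \eqref{eq: optimality condition} is routine via the chain rule through \eqref{eq: XZ and XT}.
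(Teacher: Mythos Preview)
Your proposal is correct and follows essentially the same route as the paper: perform a variational exchange (shift the mass of $Y_m$ from one output $Z_q$ to another $Z_s$), compute the directional derivative via the chain rule through \eqref{eq: XZ and XT} to obtain $D(Y_m,Z_s)-D(Y_m,Z_k)$, and then invoke concavity (which the paper packages as an inline Proposition on secant slopes of $I_t$) to upgrade the negative first-order change to an actual decrease, contradicting optimality. The only differences are organizational---you establish the hard-partition structure up front via the concave-minimum-at-a-vertex argument, whereas the paper runs the exchange from a possibly soft assignment $p_{Z_q|Y_m}=v\in(0,1]$ and records the hard-partition consequence afterward as Lemma~\ref{lemma: 1}---plus one small slip in wording: for a concave objective the full change along the segment is \emph{at most} (not ``at least'') the linear term, though your conclusion that the decrease is genuine is of course unaffected.
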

\begin{proof}
Due to the limited space, we only provide the outline of proof. Consider two arbitrary optimal partitioned outputs $Z_q$ and $Z_s$ and a trial data $Y_m$. For a given optimal quantizer $Q^*$, we suppose that $Y_m$ is allocated to $Z_q$ with the probability of $p_{Z_q|Y_m}=v$, $0 < v \leq 1$ (soft partition). We remind that $p_{(X,Y_m)}=[p_{(X_1,Y_m)},p_{(X_2,Y_m)}, \dots, p_{(X_N,Y_m)}]$ denotes the joint distribution in the sample $Y_m$. We will determine the change of the impurity function $F(X,T)$ and the constraint  $G(p_Z)$ as a function of $t$  when changing amount of $tvp_{(X,Y_m)}$ from $p_{(X,Z_q)}$ to $p_{(X,Z_s)}$ where $t$ is a scalar and $0<t<1$. By changing $tvp_{(X,Y_m)}$, the new joint distributions in $Z_q$ and $Z_s$ are $p_{(X,Z_q)}-tvp_{(X,Y_m)}$ and $p_{(X,Z_s)}+tvp_{(X,Y_m)}$, respectively. 
Thus, from (\ref{eq: XZ and XT}), the new joint distribution in $T_h$ as a function of $t$ is ${p_{(X,T_h)}}_t$  can be determined by:
\begin{eqnarray*}
{p_{(X,T_h)}}_t &=& p_{(X,T_h)} -tvp_{(X,Y_m)} A_{qh} + tvp_{(X,Y_m)}A_{sh}\\
&=& p_{(X,T_h)} +  tvp_{(X,Y_m)} (A_{sh}-A_{qh}).
\end{eqnarray*}
Now, denote $tvp_{(X,Y_m)} (A_{sh}-A_{qh}) =\delta_{th}$. The total change of impurity function $F(X,T)$ and constraint $G(p_Z)$ are:
\begin{eqnarray}
F(X,T)_t &\!=\!&  \sum_{h=1}^{H} F(p_{(X,T_h)} + \delta_{th} ) \nonumber\\
G(p_Z)_t &=& \sum_{k=1,k \neq q,s}^{K}g_k(p_{Z_k}) \nonumber\\
&+& g_q(p_{Z_q}-tvp_{Y_m}) + g_s(p_{Z_s} + tvp_{Y_m}) \nonumber.  
\end{eqnarray}
The total instantaneous change of $\beta F(X,T) + G(p_Z)$ as a function of $t$ is
\begin{small}
\begin{eqnarray}
I_t \!&\!=\!& \beta \! [\sum_{h\!=1}^{H} \! F(p_{(X,T_h)} \!+\! \delta_{th} )] \!+\!  g_q(p_{Z_q}\!-\!tvp_{Y_m}) \!+\! g_s(p_{Z_s} \!+\! tvp_{Y_m}). \nonumber \\
\label{eq: I(t)}
\end{eqnarray}
\end{small}
However,
\begin{equation}
\label{eq: derivative 1}
 \frac{\partial F(X,T)_t}{\partial t}|_{t=0}=  v \beta \sum_{h=1}^{H} \sum_{n=1}^{N} (c_k^n p_{(X_n,Y_m)}) (A_{sh}-A_{qh}),
\end{equation}
\begin{equation}
\label{eq: derivative 3}
   \frac{\partial G(p_Z)_t}{\partial t}|_{t=0}=v [d_s p_{Y_m}-d_q p_{Y_m}].
\end{equation}
From (\ref{eq: I(t)}), (\ref{eq: derivative 1}), (\ref{eq: derivative 3}) and (\ref{eq: optimality condition}), we have
\begin{eqnarray}
\frac{\partial I_t}{\partial t}|_{t=0}=v[D(Y_m,Z_s)-D(Y_m,Z_q)]. \nonumber
\end{eqnarray}
Now, using contradiction method, suppose that $D(Y_m,Z_q) > D(Y_m,Z_s)$. Thus,
\begin{equation}
\label{eq: 20}
\frac{\partial I_t}{\partial t} |_{t=0} < 0.
\end{equation}

\begin{proposition}
\label{prop: 1}
Consider $I_t$ which is defined in (\ref{eq: I(t)}). For $0 < t < a < 1$, we have:
\begin{equation}
\label{eq: gradient of I(t)}
\dfrac{I_t-I_0}{t} \geq \dfrac{I_a-I_0}{a}.
\end{equation}
\end{proposition}
\begin{proof}
 Due to the limited space, we sketch the proof as following. First, 
(\ref{eq: gradient of I(t)}) is equivalent to:
\begin{equation}
\label{eq: vyha1}
I_t \geq (1-\dfrac{t}{a})I_0 + \dfrac{t}{a}I_a.
\end{equation}
Noting that $I_t$ is the combination of the impurity function $F(p_{(X,T_h)} + \delta_{th})$ and the constraint functions $g_q(.)$, $g_s(.)$ that admit the concavity properties in Proposition \ref{prop: 2} and equation (\ref{eq: concave function}).  By using a little bit of algebra, one can show that (\ref{eq: vyha1}) follows by the concavity properties that finally proves (\ref{eq: gradient of I(t)}). Please see the full proof in our extension version.  
\end{proof}

Now, we continue to the proof of Theorem \ref{theorem: 1}. From Proposition \ref{prop: 1} and the assumption in (\ref{eq: 20}), we have:
\begin{equation*}
 0 > \frac{\partial I_t}{\partial t} |_{t=0}=\lim \dfrac{I_t-I_0}{t} \geq \dfrac{I_{1}-I_0}{1}.
\end{equation*}

Thus, $I_0>I_1$ which obviously implies that by completely changing amount of $v p_{(X,Y_m)}$  from $p_{(X,Z_q)}$ to $p_{(X,Z_s)}$, the total of the loss is obviously reduced. This contradicts to our assumption that $Q^*$ is an optimal quantizer. By contradiction method, the proof is complete. 
\end{proof}

\begin{lemma}
\label{lemma: 1}
The optimal quantizer of the problem (\ref{eq: jointly optimization}) is a
deterministic quantizer (hard clustering) i.e., $p_{Z_i|Y_j} \in \{0,1\}$, $\forall$ $i,j$.  
\end{lemma}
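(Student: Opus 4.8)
The plan is to fix the assignment of every sample $Y_j$ with $j \neq m$ and show that the objective $\beta F(X,T) + G(p_Z)$, viewed as a function of the single assignment vector $\mathbf{p}_m = (p_{Z_1|Y_m}, \dots, p_{Z_K|Y_m})$ ranging over the probability simplex $\Delta_K = \{\mathbf{p}_m \geq 0 : \sum_{k=1}^{K} p_{Z_k|Y_m} = 1\}$, is \emph{concave}. A concave function on a compact polytope attains its minimum at an extreme point, and the extreme points of $\Delta_K$ are precisely the unit vectors $e_1,\dots,e_K$, i.e.\ the hard assignments $p_{Z_k|Y_m} \in \{0,1\}$. Consequently, given any optimal quantizer, the soft assignment of $Y_m$ may be replaced by a hard one without increasing the objective; performing this for $m = 1, 2, \dots, M$ in turn --- which is legitimate because consolidating $Y_m$ onto a single $Z_k$ leaves the assignments of all other samples untouched --- yields a deterministic quantizer that is still optimal, so the optimal quantizer may be taken to be hard.

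To establish the concavity I would unwind the dependence on $\mathbf{p}_m$ one layer at a time. With the other assignments frozen, $p_{(X_n,Z_k)} = p_{(X_n,Y_m)}\, p_{Z_k|Y_m} + \mathrm{const}$ is affine in $\mathbf{p}_m$, hence so is $p_{(X_n,T_h)} = \sum_{k=1}^{K} p_{(X_n,Z_k)} A_{kh}$ by (\ref{eq: XZ and XT}), and $p_{Z_k} = p_{Y_m}\, p_{Z_k|Y_m} + \mathrm{const}$ is affine as well. Next, $F(p_{(X,T_h)})$ is concave in its joint-distribution argument: by Proposition~\ref{prop: 2} it is positively homogeneous of degree one (part (i)) and superadditive (part (ii)), and any such map satisfies $F(\lambda a + (1-\lambda) b) \geq F(\lambda a) + F((1-\lambda) b) = \lambda F(a) + (1-\lambda) F(b)$, hence is concave; each $g_k$ is concave by assumption. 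Since concavity is preserved under composition with an affine map, nonnegative scaling, and summation, $\beta \sum_{h=1}^{H} F(p_{(X,T_h)}) + \sum_{k=1}^{K} g_k(p_{Z_k})$ is a concave function of $\mathbf{p}_m$, as needed.

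The main obstacle is really just this concavity bookkeeping together with justifying the ``minimum at a vertex'' step: the minimum over the compact set $\Delta_K$ is attained at some $\mathbf{p}_m^{\star}$, which is a convex combination of the vertices $e_k$, and concavity then forces $F(\mathbf{p}_m^{\star}) \geq \min_k F(e_k)$, so some vertex attains the minimum. An alternative route that avoids the global concavity claim is to use Theorem~\ref{theorem: 1} directly: if $Y_m$ were split between $Z_q$ and $Z_s$, the optimality condition would force $D(Y_m,Z_q) = D(Y_m,Z_s)$, so $\partial I_t/\partial t|_{t=0} = 0$, and Proposition~\ref{prop: 1} would give $I_1 \leq I_0$ --- moving all of $Y_m$'s mass from $Z_q$ to $Z_s$ does not increase the objective --- so iterating again removes every soft assignment.
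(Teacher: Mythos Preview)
Your proposal is correct, and your \emph{alternative route} at the end is essentially the paper's own argument: the paper simply invokes the proof of Theorem~\ref{theorem: 1}, observing that if $Y_m$ is softly split and $Z_s$ has the smallest distance $D(Y_m,Z_s)$, then shifting all of $Y_m$'s mass onto $Z_s$ cannot increase the objective, so the optimum may be taken to be hard. Your \emph{main} approach, however, is genuinely different and arguably cleaner: rather than leaning on the first-order distance machinery and Proposition~\ref{prop: 1}, you argue directly that the objective is concave in the single assignment vector $\mathbf{p}_m$ (via the positive homogeneity plus superadditivity of $F$ from Proposition~\ref{prop: 2}, the assumed concavity of each $g_k$, and affine dependence of all intermediate quantities on $\mathbf{p}_m$), and then appeal to the elementary fact that a concave function on the simplex is minimized at a vertex. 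What this buys you is a self-contained proof that does not require Theorem~\ref{theorem: 1} as a prerequisite and handles ties automatically; what the paper's route buys is economy, since the infrastructure of Theorem~\ref{theorem: 1} is already in place and the lemma falls out in two lines. Your sequential replacement over $m=1,\dots,M$ is also slightly more explicit than the paper's sketch about why a fully deterministic quantizer can be extracted.
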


\begin{proof}
Due to the limited space, we do not give the full proof. However, based on the proof of Theorem \ref{theorem: 1}, one can easily verify that if quantizer $Q$ only allocates a part of $p_{(X,Y_m)}$ to $p_{(X,Z_q)}$, i.e., distribute $v p_{(X,Y_m)}$ to $p_{(X,Z_q)}$ for $0 <v <1$ (soft partition), then $Q$ is not optimal. The reason is that if the distance from $D(Y_m,Z_s)$ is shortest, the impurity can be reduced by completely moving  $v p_{(X,Y_m)}$ from $Z_q$ to $Z_s$ i.e., $p_{Z_s|Y_m}=1$. That said,  the optimal partition is hard partition. 
\end{proof}

\section{Algorithms}
 \label{sec: algorithm}
\subsection{Practical Algorithm}
From the optimality condition in Theorem \ref{theorem: 1},  we should allocate the data $Y_m$ to the partitioned output $Z_k$ if and only if the "distance" $D(Y_m,Z_k)$ is shortest. Therefore, a simple alternative optimization algorithm that is very similar to the k-means algorithm can be applied to find the locally optimal solution. Our algorithm is proposed in Algorithm \ref{alg: 1}. We also note that the distance $D(Y_m, Z_k)$ is 
\begin{eqnarray*}
D(Y_m,Z_k) &\!=\!&\beta \sum_{h=1}^{H} \sum_{n=1}^{N}[c_k^n p_{(X_n,Y_m)} ]A_{kh} \!+\! d_k p_{Y_m} \nonumber\\
			   &=& p_{Y_m} [\beta \sum_{h=1}^{H} \sum_{n=1}^{N}[c_k^n p_{X_n|Y_m} ]A_{kh} \!+\! d_k].
\end{eqnarray*}
Therefore, one can ignore the constant $p_{Y_m}$ while comparing the distance $D(Y_m,Z_k)$ and use a simpler version distance $D'(Y_m,Z_k)$ as following
\begin{equation}
\label{eq: simpler distance}
D'(Y_m,Z_k)=\beta \sum_{h=1}^{H} \sum_{n=1}^{N}[c_k^n p_{X_n|Y_m} ]A_{kh} \!+\! d_k.
\end{equation}

\begin{footnotesize}
	\begin{algorithm}
		\caption{Communication Optimized Partition}
		\label{alg: 1}
		\begin{algorithmic}[1]
			\State{\textbf{Input}: $p_X$, $p_Y$, $p_{(X,Y)}$, $f(.)$, $g_k(.)$ and $\beta$.}
			\State{\textbf{Output}: $Z=\{Z_1,Z_2,\dots,Z_K$ \} }
			
			\State{\textbf{Initialization}: Randomly hard clustering $Y$ into $K$ clusters. }
			
			\State{\textbf{Step 1}: Updating $p_{(X,Z_k)}$, $p_{(X,T_h)}$ and $d_k$ for $\forall$ $k \in \{1,2,\dots,K\}$ and $h \in \{1,2,\dots,H\}$:}
		\begin{equation*}
		p_{(X_n,Z_k)}= \sum_{Y_m \in Z_k}^{}p_{(X_n,Y_m)},
		\end{equation*}
		\begin{equation*}
p_{(X_n,T_h)}=\sum_{k=1}^{K}p_{(X_n,Z_k)}A_{kh},
		\end{equation*}
		\begin{equation*}
		c_k^n= \frac{\partial F(p_{(X,T_k)})}{\partial p_{(X_n,T_k)}}, \forall n \in \{1,2,\dots,N\}, 
		\end{equation*}
		\begin{equation*}
		p_{Z_k} =\sum_{Y_m \in Z_k}p_{Y_m},
		\end{equation*}			
		\begin{equation*}
		d_k= \frac{\partial g_k(p_{Z_k})}{\partial p_{Z_k}}. 
		\end{equation*}
			\State{\textbf{Step 2}: Updating the membership by measurement the distance from each $Y_m \in Y$ to each $Z_k \in Z$}
\begin{equation}
\label{eq: nearest clustering}
Z_k=\{Y_m| D(Y_m,Z_k) \leq D(Y_m,Z_s), \forall s \neq k,
\end{equation}
where $D(Y_m,Z_k)$ is defined in (\ref{eq: optimality condition}) or in (\ref{eq: simpler distance}). 
			\State{\textbf{Step 3}: Go to Step 1 until all partitioned outputs $\{Z_1,Z_2,\dots,Z_K\}$ stop changing or the maximum number of iterations has been reached.}				
		\end{algorithmic}
	\end{algorithm}
\end{footnotesize}

The Algorithm \ref{alg: 1} works similarly to the k-means algorithm and the distance from each point in $Y$ to each partitioned output in $Z$ is updated over each loop. The complexity of this algorithm, therefore, is $O(TNKM)$ where $T$ is the number of iterations, $N$, $K$, $M$ are the size of data dimensional, the size of partitioned set $Z$ and the size of data set $Y$. 

\subsection{Hyper-plane separation}
\label{subsection: 3-C}

Similar to the work in \cite{burshtein1992minimum}, in this paper, we show that the optimal partition is equivalent to the cuts by hyper-planes in the space of the posterior probability. Therefore, existing a polynomial time algorithm that can find the globally optimal quantizer. Indeed, consider the optimal quantizer $Q^*$ that produces a given optimal partition output $Z=\{Z_1,Z_2,\dots,Z_K\}$. From the optimality condition in Theorem \ref{theorem: 1}, we know that $\forall$ $Y_m \in Z_k$, then $D(Y_m,Z_k) \leq D(Y_m,Z_s)$ for $\forall$ $s \neq k$. Now, using the distance in (\ref{eq: simpler distance}), we have
\begin{equation*}
\beta \sum_{h=1}^{H} \sum_{n=1}^{N}[c_k^n p_{X_n|Y_m} ]A_{kh} \!+\! d_k \leq \beta \sum_{h=1}^{H} \sum_{n=1}^{N}[c_s^n p_{X_n|Y_m} ]A_{sh} \!+\! d_s,
\end{equation*}
or
\begin{eqnarray*}
0 &\geq &  \beta \sum_{h=1}^{H} \sum_{n=1}^{N} [c_k^nA_{kh}-c_s^n A_{sh}] p_{X_n|Y_m} + d_k-d_s.
\end{eqnarray*}
From $p_{(X_N|Y_m)}=1-\sum_{n=1}^{N-1} p_{(X_n|Y_m)}$, then
\begin{small}
\begin{eqnarray}
\label{eq: hyperplane}
0 \!&\! \geq \!&\!  \beta \! \sum_{h=1}^{H} \sum_{n=1}^{N-1} [(c_k^nA_{kh } \! - \! c_s^n A_{sh}) - (c_k^N A_{kh} \!-\! c_s^N A_{sh})] p_{(X_n|Y_m)} \nonumber\\
\! &\!-\!&\! [d_s\!-\!d_k\!+\!\beta \! \sum_{h=1}^{H} (c_s^N A_{sh}\!-\!c_k^N A_{kh})].
\end{eqnarray}
\end{small}
For a given optimal quantizer $Q^*$, $c_k^n$ ,$c_s^n$, $d_k$, $d_s$ are all scalars $\forall$ $n,k,s$. Thus, equation (\ref{eq: hyperplane}) is equivalent to a hyper-plane in the $N-1$ dimensional probability space that can be constructed by using posterior probability $p_{X_n|Y_m}$ $\forall$ $n=1,2,\dots,N-1$. That said, all of $Y_m \in Z_k$ is separated by a hyper-plane cut in $N-1$ dimensional probability space of posterior probability $p_{X_n|Y_m}$. Similar to the results proposed in \cite{burshtein1992minimum}, existing a polynomial time  algorithm having time complexity of $O(M^{N})$ which can exhausted searching all the hyper-plane cuts that finally provides the globally optimal quantizer. 

\subsection{Discussion and Application}

Due to the limited space, we will not provide numerical results in this paper. Instead, using the property of hyper-plane separation, we show that a polynomial time algorithm having the complexity of $O(M^3)$ is able to find the globally optimal quantizer if the input source is binary. Similar to the work in \cite{kurkoski2014quantization}, if $N=2$, then a hyper-plane is a point in the probability space of posterior probability $p_{X|Y}$. Thus, the globally optimal quantizer can be found by considering only the convex cell quantizer in probability space, i.e., the optimal quantizer is a scalar quantizer in posterior probability variable $p_{X_1|Y}$.  The convex cell property can help to find the global optimal quantizer in a polynomial time complexity using dynamic programming. We refer the reader to the work in \cite{kurkoski2014quantization} for the detailed algorithm. The complexity of the traditional dynamic programming to find the globally optimal quantizer is $O(M^3)$ in the worst case. In \cite{iwata2014quantizer}, the time complexity of algorithm in \cite{kurkoski2014quantization} can be further reduced to a linear time complexity using SMAWK  algorithm \cite{aggarwal1987geometric}
As an open problem, we wonder that is it possible to using the same technique in \cite{iwata2014quantizer} to reduce the time complexity of our problem if the input source is binary?

\section{Conclusion}
 \label{sec: conclusion}
The problem of designing the optimal quantizer that minimizes the end-to-end impurity function between the input and the final output under a partitioned output constraint is investigated. Our results generalized some previous results.  An iteration algorithm was proposed to find the local optimal quantizer in a linear time complexity. In additional, we also show that the optimal quantizer produces a hard partition that is equivalent to hyper-plane cuts in the probability space of the posterior probability. Thus, there exists a polynomial time algorithm that can determine the globally optimal quantizer. Interestingly, if the input source is binary, a dynamic programming technique can be applied that is able to find the globally optimal solution in a cubic of time complexity.
\
\bibliographystyle{unsrt}
\bibliography{sample}

\end{document}